\documentclass{article}
\usepackage[utf8]{inputenc}

\usepackage{amsmath,amsfonts,latexsym,graphicx,amsthm}
\usepackage{amssymb}
\usepackage{color}
\usepackage[linesnumbered,boxed,ruled,vlined]{algorithm2e}
\usepackage[breaklinks=true]{hyperref}
\usepackage{extarrows}

\hypersetup{colorlinks,linkcolor=blue,citecolor=blue,urlcolor=blue}
\newtheorem{Def}{Definition}
\newtheorem{Thm}{Theorem}
\newtheorem{Prop}{Proposition}
\newtheorem{Exam}{Example}

\newtheorem{Lem}{Lemma}

\def\<{\langle}
\def\>{\rangle}

\newcommand{\cA}{\mathcal{A}}

\newcommand{\cH}{\mathcal{H}}
\newcommand{\cX}{\mathcal{X}}

\newcommand{\tr}{\mathrm{tr}}
\newcommand{\ip}[2]{\langle #1 , #2\rangle}
\newcommand{\pos}[1]{\mathrm{Pos}\left(#1\right)}

\title{Quantum Coupling and Strassen Theorem}
\author{
  Li Zhou\footnote{
    Department of Computer Science and Technology, Tsinghua University, Beijing, China; and Centre for Quantum Software and Information, School of Software, Faculty of Engineering and Information
Technology, University of Technology Sydney, NSW, Australia}
  \and
    Shenggang Ying\footnote{
  Centre for Quantum Software and Information, School of Software, Faculty of Engineering and Information
Technology, University of Technology Sydney, NSW, Australia
  }
\and
 Nengkun Yu\footnote{%
    Centre for Quantum Software and Information, School of Software, Faculty of Engineering and Information
Technology, University of Technology Sydney, NSW, Australia}
 \and
 Mingsheng Ying\footnote{Centre for Quantum Software and Information, School of Software, Faculty of Engineering and Information
Technology, University of Technology Sydney, NSW, Australia; and Institute of Software, Chinese Academy of Sciences, Beijing 100190, China; and Department of Computer Science and Technology, Tsinghua University, Beijing 100084, China
  }
}

\date{\today}

\begin{document}

\maketitle

\begin{abstract}
We introduce a quantum generalisation of the notion of coupling in probability theory. Several interesting examples and basic properties of quantum couplings are presented. In particular, we prove a quantum extension of Strassen theorem for probabilistic couplings, a fundamental theorem in probability theory that can be used to bound the probability of an event in a distribution by the probability of an event in another distribution coupled with the first.
\end{abstract}

\section{Introduction} Coupling is a powerful technique in probability theory, with which random variables can be linked to or compared with each other. It has been widely used in the studies of random walks and Markov chains, interacting particle systems and diffusions, just name a few, in order to establish limit theorems about them, to develop approximations for them, or to derive correlation inequalities between them \cite{Lin02}.

Recently, a very successful application of coupling in computer science was discovered by Barthe et al. \cite{Bar09} that it can serve as a solid mathematical foundation for defining the semantics of probabilistic relational Hoare logic. This discovery enables them to develop a series of powerful proof techniques for reasoning about relational properties of probabilistic computations, in particular, for verification of cryptographic protocols and differential privacy \cite{Bar16, Bar16a, Bar17, Hsu17}.

There is a simple and natural correspondence between probability theory and quantum theory: probability distributions/density operators (mixed quantum states), marginal distributions/partial traces, and more. This correspondence suggests us to explore the possibility of generalising the coupling techniques for reasoning about quantum systems. We expect that these techniques can help us to extend quantum Hoare logic \cite{Ying11} for proving relational properties between quantum programs and further for verifying quantum cryptographic protocols and differential privacy in quantum computation \cite{Zhou17}. But in this paper, we focus on studying quantum couplings themselves.

Strassen theorem \cite{Str65} is a fundamental theorem in probability theory that can be used to bound the probability of an event in a distribution by the probability of an event in another distribution coupled with the first.
The main technical contribution of this paper is proving an elegant (in our opinion) quantum generalisation of Strassen theorem.

\section{Background and Basic Definitions}

\subsection{Probabilistic Coupling}
For convenience of the reader, we first briefly recall the basics of probabilistic coupling, following \cite{Hsu17}. Let $\mathcal{A}$ be a finite or countably infinite set. A sub-distribution over $\mathcal{A}$ is a mapping $\mu:\mathcal{A}\rightarrow [0,1]$ such that $\sum_{a\in\mathcal{A}}\mu (a)\leq 1$. In paricular, if $\sum_{a\in\mathcal{A}}\mu (a)= 1$, then $\mu$ is called a distribution over $\mathcal{A}$. For a sub-distribution $\mu$ over $\mathcal{A}$, we define: \begin{enumerate}\item The weight of $\mu$ is $|\mu|=\sum_{a\in\mathcal{A}}\mu(a);$
\item The support of $\mu$ is ${\rm supp}(\mu)=\{a\in\mathcal{A}: \mu(a)>0\};$ \item The probability of an event $S\subseteq\mathcal{A}$ is $\mu(S)=\sum_{a\in S}\mu(a).$\end{enumerate} Moreover, let $\mu$ be a joint sub-distribution, i.e. a sub-distribution over Cartesian product $\mathcal{A}_1\times\mathcal{A}_2$. Then its marginals $\pi_1(\mu), \pi_2(\mu)$ over $\mathcal{A}_1$ and $\mathcal{A}_2$ are, respectively, defined by \begin{align*}\pi_1(\mu)(a_1)&=\sum_{a_2\in\mathcal{A}_2}\mu(a_1,a_2)\ {\rm for\ every}\ a_1\in\mathcal{A}_1,\\
\pi_2(\mu)(a_2) &=\sum_{a_1\in\mathcal{A}_1}\mu(a_1,a_2)\ {\rm for\ every}\ a_2\in\mathcal{A}_2.\end{align*}

Now we can define the notion of coupling.

\begin{Def}[Probabilistic Coupling]Let $\mu_1, \mu_2$ be sub-distributions over $\mathcal{A}_1,\mathcal{A}_2$, respectively. Then a sub-distribution $\mu$ over $\mathcal{A}_1\times\mathcal{A}_2$ is called a coupling for $(\mu_1,\mu_2)$ if $\pi_1(\mu)=\mu_1$ and $\pi_2(\mu)=\mu_2$.
\end{Def}

Here are some simple examples of coupling taken from \cite{Hsu17}.

\begin{Exam}\label{exam-bit} Let $\mathbf{Flip}$ be the uniform distribution over booleans, i.e. $\mathbf{Flip}(0)=\mathbf{Flip}(1)=\frac{1}{2}$. Then the following are two couplings for $(\mathbf{Flip},\mathbf{Flip})$:\begin{enumerate}
\item Identity coupling: $\mu_{\rm id}(a_1,a_2)=\begin{cases}\frac{1}{2}\ &{\rm if}\ a_1=a_2,\\ 0\ &{\rm otherwise}.\end{cases}$
\item Negation coupling: $\mu_\neg(a_1,a_2)=\begin{cases}\frac{1}{2}\ &{\rm if}\ \neg a_1=a_2,\\ 0\ &{\rm otherwise}.\end{cases}$
\end{enumerate}
More generally, let $\mathbf{Unif}_\mathcal{A}$ be the uniform distribution over a finite nonempty set $\mathcal{A}$, i.e. $\mathbf{Unif}_\mathcal{A}(a)=\frac{1}{|\mathcal{A}|}$ for every $a\in\mathcal{A}$. Then each bijection $f:\mathcal{A}\rightarrow\mathcal{A}$ yields a coupling $\mu_f$ for $(\mathbf{Unif}_\mathcal{A},\mathbf{Unif}_\mathcal{A})$:
$$\mu_f(a_1,a_2)=\begin{cases}\frac{1}{|\mathcal{A}|}\ &{\rm if}\ f(a_1)=a_2,\\ 0\ &{\rm otherwise}.\end{cases}$$
\end{Exam}

\begin{Exam}\label{exam-id} For any sub-distribution $\mu$ over $\mathcal{A}$, the identity coupling for $(\mu,\mu)$ is: $\mu_{\rm id}(a_1,a_2)=\begin{cases}\mu(a)\ &{\rm if}\ a_1=a_2 =a,\\ 0\ &{\rm otherwise}.\end{cases}$
\end{Exam}

\begin{Exam}\label{exam-triv} For any distributions $\mu_1,\mu_2$ over $\mathcal{A}_1,\mathcal{A}_2$, respectively, the independent or trivial coupling is:
$\mu_\times (a_1,a_2)=\mu_1(a_1)\cdot\mu_2(a_2).$
\end{Exam}

Obviously, coupling for a pair of distributions is not unique. Then the notion of lifting can be introduced to choose a desirable coupling.

\begin{Def}[Probabilistic Lifting] Let $\mu_1,\mu_2$ be sub-distributions over $\mathcal{A}_1,\mathcal{A}_2$, respectively, and let $\mathcal{R}\subseteq\mathcal{A}_1\times\mathcal{A}_2$ be a relation. Then a sub-distribution $\mu$ over $\mathcal{A}_1\times\mathcal{A}_2$ is called a witness for the $\mathcal{R}$-lifting of $(\mu_1,\mu_2)$ if: \begin{enumerate}\item $\mu$ is a coupling for $(\mu_1,\mu_2)$; \item ${\rm supp}(\mu)\subseteq \mathcal{R}$.\end{enumerate} Whenever a witness exists, we say that $\mu_1$ and $\mu_2$ are related by the $\mathcal{R}$-lifting and write $\mu_1\mathcal{R}^{\#}\mu_2$.
\end{Def}

\begin{Exam} \begin{enumerate}\item Coupling $\mu_f$ in Example \ref{exam-bit} is a witness for the lifting $\mathbf{Unif}_\mathcal{A}$ $\{(a_1,a_2)|f(a_1)=a_2\}^\#\mathbf{Unif}_\mathcal{A}.$
\item Coupling $\mu_{\rm id}$ in Example \ref{exam-id} is a witness for the lifting $\mu =^\# \mu.$
\item Coupling $\mu_\times$ in Example \ref{exam-triv} is a witness for the lifting $\mu_1 T^\# \mu_2$, where $T=\mathcal{A}_1\times\mathcal{A}_2.$\end{enumerate}\end{Exam}

\begin{Prop}\label{pro-basic}\begin{enumerate}\item Let $\mu_1,\mu_2$ be sub-distributions over $\mathcal{A}_1, \mathcal{A}_2$, respectively. If there exists a coupling for $(\mu_1,\mu_2)$, then $|\mu_1|=|\mu_2|.$
\item Let $\mu_1,\mu_2$ be sub-distributions over the same $\mathcal{A}$. Then $\mu_1=\mu_2$ if and only if $\mu_1 =^{\#} \mu_2$.\end{enumerate}
\end{Prop}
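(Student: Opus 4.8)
The plan is to prove the two parts of Proposition~\ref{pro-basic} separately, each by unfolding the definitions of coupling and marginal.

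For part~1, suppose $\mu$ is a coupling for $(\mu_1,\mu_2)$, so that $\pi_1(\mu)=\mu_1$ and $\pi_2(\mu)=\mu_2$. First I would compute the total weight of $\mu_1$ by summing its values and substituting the definition of the first marginal:
\[
|\mu_1|=\sum_{a_1\in\mathcal{A}_1}\mu_1(a_1)=\sum_{a_1\in\mathcal{A}_1}\pi_1(\mu)(a_1)=\sum_{a_1\in\mathcal{A}_1}\sum_{a_2\in\mathcal{A}_2}\mu(a_1,a_2).
\]
This double sum is just the total weight $|\mu|$ of the joint sub-distribution. The identical computation starting from $\mu_2$ and $\pi_2(\mu)$ gives $|\mu_2|=\sum_{a_2}\sum_{a_1}\mu(a_1,a_2)=|\mu|$ as well. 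Since both equal $|\mu|$ (summation over a countable nonnegative array may be reordered by Tonelli/Fubini for nonnegative terms, which covers the countably infinite case), we conclude $|\mu_1|=|\mu_2|$.

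For part~2, I would argue the two implications. The forward direction is essentially Example~\ref{exam-id}: if $\mu_1=\mu_2=:\mu$, then the identity coupling $\mu_{\rm id}$ is a coupling for $(\mu,\mu)$ whose support lies in the diagonal $\{(a,a):a\in\mathcal{A}\}$, which is exactly the relation $=$; hence it witnesses $\mu_1 =^{\#}\mu_2$. For the converse, suppose a witness $\mu$ exists for the $=$-lifting, so $\mu$ is a coupling for $(\mu_1,\mu_2)$ with ${\rm supp}(\mu)\subseteq\{(a,a):a\in\mathcal{A}\}$. The support condition forces $\mu(a_1,a_2)=0$ whenever $a_1\neq a_2$, so the off-diagonal entries vanish. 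Then for every $a\in\mathcal{A}$,
\[
\mu_1(a)=\pi_1(\mu)(a)=\sum_{a_2\in\mathcal{A}}\mu(a,a_2)=\mu(a,a),
\]
since every term with $a_2\neq a$ is zero, and symmetrically $\mu_2(a)=\pi_2(\mu)(a)=\mu(a,a)$. Therefore $\mu_1(a)=\mu(a,a)=\mu_2(a)$ for all $a$, giving $\mu_1=\mu_2$.

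Neither part presents a genuine obstacle; the argument is a direct bookkeeping exercise with the definitions. The only point requiring a moment of care is the interchange of the order of summation in part~1 when $\mathcal{A}_1,\mathcal{A}_2$ are countably infinite, but this is justified because all summands $\mu(a_1,a_2)$ are nonnegative. I would state this reordering explicitly rather than leave it implicit.
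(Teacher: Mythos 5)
Your proof is correct: both parts follow by direct unfolding of the definitions exactly as you describe, and your explicit appeal to Tonelli for the nonnegative double sum in part~1 is the right way to handle the countably infinite case. The paper in fact states Proposition~\ref{pro-basic} without proof (treating it as routine, following \cite{Hsu17}), but your argument is precisely the standard one and mirrors the structure the authors do spell out for the quantum analogue, where the converse of part~2 is likewise obtained by showing both marginals coincide with the ``diagonal'' of the witness.
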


\subsection{Quantum Coupling}

With the correspondence of probability distributions/density operators (mixed quantum states) and marginal distributions/partial traces mentioned in the Introduction, we can introduce the notion of quantum coupling. To this end, let us first recall several basic notions from quantum theory; for details, we refer to \cite{NC00}.

Suppose that $\mathcal{H}$ is a finite-dimensional Hilbert space. Let $\mathrm{Herm}(\mathcal{H})$ be the set of Hermitian matrices in $\mathcal{H}$. Let $\mathrm{Pos}(\mathcal{H})$ be the set of positive (semidefinite) matrices in $\mathcal{H}$, and $\mathcal{D}(\mathcal{H})\subset\mathrm{Pos}(\cH)$ is the set of partial density operators, $i.e.$, positive (semidefinite) matrices with trace one.
A positive operator $\rho$ in $\mathcal{H}$ is called a partial density operator if its trace $\mathit{tr}(\rho)=\sum_i\langle i|\rho|i\rangle\leq 1$, where $\{|i\rangle\}$ is an orthonormal basis of $\mathcal{H}$.

We define its support: \begin{align*}\mathrm{supp}(\rho)&=\mathrm{span}\{{\rm eigenvectors\ of}\ \rho\ {\rm with\ nonzero\ eigenvalues}\}\\ &=\mathrm{span}\{|\psi\>\ \big|\ \tr(\rho|\psi\>\<\psi|) = 0\}^\bot.\end{align*}
 If $A$ is an observable, i.e. Hermitian operator, in $\mathcal{H}$, then its expectation in state $\rho$ is $\langle A\rangle_\rho=\mathit{tr}(A\rho).$
Furthermore, let $\mathcal{H}_1, \mathcal{H}_2$ be two Hilbert space. Then partial trace over $\mathcal{H}_1$ is a mapping $\mathit{tr}_1(\cdot)$ from operators in $\mathcal{H}_1\otimes\mathcal{H}_2$ to operators in $\mathcal{H}_2$ defined by
$$\mathit{tr}_1(|\varphi_1\rangle\langle\psi_1|\otimes |\varphi_2\rangle\langle\psi_2|)=\langle\psi_1|\varphi_1\rangle\cdot |\varphi_2\rangle\langle\psi_2|$$ for all $|\varphi_1\rangle,|\psi_1\rangle\in \mathcal{H}_1$ and $|\varphi_2\rangle,|\psi_2\rangle\in\mathcal{H}_2$ together with linearity.
The partial trace $\mathit{tr}_2(\cdot)$ over $\mathcal{H}_2$ can be defined dually.

Now we are ready to define the concept of coupling.

\begin{Def}[Quantum Coupling] Let $\rho_1\in\mathcal{D}(\mathcal{H})$ and $\rho_2\in\mathcal{D}(\mathcal{H}_2)$. Then $\rho\in\mathcal{D}(\mathcal{H}_1\otimes\mathcal{H}_2)$ is called a coupling for $(\rho_1,\rho_2)$ if $\mathit{tr}_1(\rho)=\rho_2$ and $\mathit{tr}_2(\rho)=\rho_1$.\end{Def}

This is actually a very special case of the famous quantum marginal problem, see \cite{Bravyi2004,Chen2014,Carlen2013,YZYY2018} as a very incompleted list for recent development.

\begin{Exam}\label{exam-qunif} Let $\mathcal{H}$ be a Hilbert space and $\mathcal{B}=\{|i\rangle\}$ an orthonormal basis of $\mathcal{H}$. Then the uniform density operator on $\cH$ is
$$\mathbf{Unif}_\mathcal{\cH}=\frac{1}{d}\sum_i|i\rangle\langle i|$$
where $d=\dim\mathcal{H}$ is the dimension of $\mathcal{H}$. Indeed, the uniform density operator on $\cH$ is unique and independent with the choice of orthonormal basis. For each unitary operator $U$ in $\mathcal{H}$, we write $U(\mathcal{B})=\{U|i\rangle\}$, which is also an orthonormal basis of $\mathcal{H}$. Then
$$\rho_U=\frac{1}{d}\sum_i(|i\rangle U|i\rangle)(\langle i|\langle i|U^\dag)$$
is a coupling for $(\mathbf{Unif}_\mathcal{H}, \mathbf{Unif}_{\mathcal{H}})$. In general, for different $U$ and $U^\prime$, $\rho_U \neq \rho_{U^\prime}$, though they are both the couplings for $(\mathbf{Unif}_\mathcal{H}, \mathbf{Unif}_{\mathcal{H}})$.
\end{Exam}

\begin{Exam}\label{exam-qid} Let $\rho$ be a partial density operator in $\mathcal{H}$. Then by the spectral decomposition theorem, $\rho$ can be written as $\rho=\sum_i p_i|i\rangle\langle i|$ for some orthonormal basis $\mathcal{B}=\{|i\rangle\}$ and $p_i\geq 0$ with $\sum_i p_i\leq 1$. We define: $$\rho_{{\rm id}(\mathcal{B})}=\sum_i p_i|ii\rangle\langle ii|.$$ Then it is to see that $\rho_{{\rm id}(\mathcal{B})}$ is a coupling for $(\rho,\rho)$. A difference between this example and Example \ref{exam-id} is that $\rho$ can be decomposed with other orthonormal bases, say $\mathcal{D}=\{|j\rangle\}$: $\rho=\sum_j q_j|j\rangle\langle j|.$ In general, $\rho_{{\rm id}(\mathcal{B})}\neq\rho_{{\rm id}(\mathcal{D})}$, and we can define a different coupling: $$\rho_{{\rm id}(\mathcal{D})}=\sum_jq_j|jj\rangle\langle jj|$$ for $(\rho,\rho)$.
\end{Exam}

\begin{Exam}\label{exam-qtriv} Let $\rho_1\in\mathcal{D}(\mathcal{H}_1)$ and $\rho_2\in\mathcal{D}(\mathcal{H}_2)$ be density operators. Then
tensor product $\rho_\otimes =\rho_1\otimes\rho_2$ is a coupling for $(\rho_1,\rho_2)$.
\end{Exam}

The notion of lifting can also be easily generalised into the quantum setting.

\begin{Def}[Quantum Lifting] Let $\rho_1\in\mathcal{D}(\mathcal{H}_1)$ and $\rho_2\in\mathcal{D}(\mathcal{H}_2)$, and let $\mathcal{X}$ be a subspace of $\mathcal{H}_1\otimes\mathcal{H}_2$. Then $\rho\in\mathcal{D}(\mathcal{H}_1\otimes\mathcal{H}_2)$ is called a witness of the lifting $\rho_1\mathcal{X}^\#\rho$ if: \begin{enumerate}\item $\rho$ is a coupling for $(\rho_1,\rho_2)$; \item $\mathit{supp}(\rho)\subseteq\mathcal{X}$.\end{enumerate}
\end{Def}

\begin{Exam}\begin{enumerate}\item The coupling $\rho_U$ in Example \ref{exam-qunif} is a witness for the lifting: $$\mathbf{Unif}_{\cH}\mathcal{X}(\mathcal{B},U)^\#\mathbf{Unif}_{\cH}$$
where $\mathcal{X}(\mathcal{B},U)=\mathit{span}\{|i\rangle U|i\rangle\}$ is a subspace of $\mathcal{H}\otimes\mathcal{H}.$
\item The coupling $\rho_{{\rm id}(\mathcal{B})}$ in Example \ref{exam-qid} is a witness of the lifting $\rho =_\mathcal{B}^\# \rho$, where $=_\mathcal{B}\ =\mathit{span}\{|ii\rangle\}$ defined by the orthonormal basis $\mathcal{B}=\{|i\rangle\}$ is a subspace of $\mathcal{H}\otimes\mathcal{H}$.
It is interesting to note that the maximal entangled state $|\Psi\rangle=\frac{1}{\sqrt{d}}\sum_i |ii\rangle$ is in $=_\mathcal{B}.$
\item The coupling $\rho_\otimes$ in Example \ref{exam-qtriv} is a witness of the lifting $\rho_1 (\mathcal{H}_1\otimes\mathcal{H}_2)^\# \rho_2$.
\end{enumerate}\end{Exam}

As a quantum generalisation of Proposition \ref{pro-basic}, we have:

\begin{Prop}\begin{enumerate}\item Let $\rho_1\in\mathcal{D}(\mathcal{H}_1)$ and $\rho_2\in\mathcal{D}(\mathcal{H}_2)$. If there exists a coupling for $(\rho_1,\rho_2)$, then $\mathit{tr}(\rho_1)=\mathit{tr}(\rho_2)$.
\item Let $\rho_1,\rho_2\in\mathcal{D}(\mathcal{H})$. Then $\rho_1=\rho_2$ if and only if $\exists$ orthonormal basis $\mathcal{B}$ s.t. $\rho_1 =_\mathcal{B}^\# \rho_2$.
\end{enumerate}\end{Prop}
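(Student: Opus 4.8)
The plan is to treat the two parts separately, since they rest on different facts: Part 1 is pure trace bookkeeping for the partial trace, while Part 2 reduces to computing the two partial traces of an operator whose support is confined to the diagonal subspace $=_\mathcal{B}$.

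For Part 1, I would start from any coupling $\rho$ for $(\rho_1,\rho_2)$, so that $\tr_1(\rho)=\rho_2$ and $\tr_2(\rho)=\rho_1$. The single fact needed is that the partial trace preserves the total trace, i.e. $\tr(\tr_1(\rho))=\tr(\rho)=\tr(\tr_2(\rho))$, which follows by checking the defining formula on rank-one tensors $|\varphi_1\rangle\langle\psi_1|\otimes|\varphi_2\rangle\langle\psi_2|$ and extending by linearity. Chaining the two identities gives $\tr(\rho_2)=\tr(\rho)=\tr(\rho_1)$, as claimed.

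For Part 2, the forward implication is essentially Example \ref{exam-qid}: if $\rho_1=\rho_2=\rho$, I would take $\mathcal{B}$ to be any spectral (eigen-)basis of $\rho$, and then $\rho_{\mathrm{id}(\mathcal{B})}$ is a coupling for $(\rho,\rho)$ whose support lies in $=_\mathcal{B}=\mathrm{span}\{|ii\rangle\}$, so it witnesses $\rho_1 =_\mathcal{B}^\# \rho_2$. The converse is the substantive direction. Suppose $\rho$ witnesses $\rho_1 =_\mathcal{B}^\# \rho_2$ with $\mathcal{B}=\{|i\rangle\}$. Because $\mathrm{supp}(\rho)\subseteq\mathrm{span}\{|ii\rangle\}$ and $\rho$ is positive, its range sits inside this subspace, so $\rho=P\rho P$ for the projector $P$ onto $\mathrm{span}\{|ii\rangle\}$; expanding, $\rho=\sum_{i,j}c_{ij}|ii\rangle\langle jj|$ for some positive semidefinite matrix $(c_{ij})$.

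The heart of the argument is that both partial traces kill every off-diagonal term. Writing $|ii\rangle\langle jj|=(|i\rangle\langle j|)\otimes(|i\rangle\langle j|)$ and applying the defining formulas for $\tr_1$ and $\tr_2$, each term acquires a factor $\langle j|i\rangle=\delta_{ij}$, so both partial traces collapse to the same diagonal operator $\sum_i c_{ii}|i\rangle\langle i|$. Hence $\rho_1=\tr_2(\rho)=\sum_i c_{ii}|i\rangle\langle i|=\tr_1(\rho)=\rho_2$, giving $\rho_1=\rho_2$. I do not anticipate a genuine obstacle; the only delicate point is the orthogonality bookkeeping that annihilates the cross terms, which is exactly the property the diagonal subspace $=_\mathcal{B}$ is engineered to enforce.
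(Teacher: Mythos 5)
Your proof is correct and follows essentially the same route as the paper: both arguments reduce the converse of Part~2 to the observation that an operator supported in $\mathrm{span}\{|ii\rangle\}$ has equal partial traces because the cross terms $|ii\rangle\langle jj|$ are annihilated by the factor $\langle j|i\rangle=\delta_{ij}$. The only cosmetic difference is that the paper expands $\rho$ as an ensemble of pure states $|\Psi_j\rangle=\sum_i\alpha_{ji}|ii\rangle$ while you expand it directly as a matrix $\sum_{i,j}c_{ij}|ii\rangle\langle jj|$; the computation is the same.
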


\begin{proof} Part 1 and Part 2 {($\Rightarrow$)} are obvious. Here, we prove Part 2 {($\Leftarrow$)}. If $\rho_1 =^\#_\mathcal{B} \rho_2$, then there exists a coupling $\rho$ for $(\rho_1,\rho_2)$ such that $\mathit{supp}(\rho)\subseteq\mathit{span}\{|ii\rangle\},$ where $\mathcal{B}=\{|i\rangle\}$. Then we have: $\rho=\sum_j p_j |\Psi_j\rangle\langle\Psi_j|$ for some $|\Psi_j\rangle\in\mathit{span}\{|ii\rangle\}$ and $p_j$. Furthermore, for each $j$, we can write: $|\Psi_j\rangle=\sum_i\alpha_{ji}|ii\rangle.$ Then it is routine to show that $\mathit{tr}_1(|\Psi_j\rangle\langle\Psi_j|)=
\mathit{tr}_2(|\Psi_j\rangle\langle\Psi_j|)=\sum_i |\alpha_{ji}|^2|i\rangle\langle i|.$ Therefore, it holds that $\rho_1=\mathit{tr}_2(\rho)=\sum_j p_j \mathit{tr}_2(|\Psi_j\rangle\langle\Psi_j|)=\sum_j p_j \mathit{tr}_1(|\Psi_j\rangle\langle\Psi_j|)
=\mathit{tr}_1(\rho)=\rho_2.$
\end{proof}

\section{Quantum Strassen Theorem}

As mentioned in the Introduction, a fundamental theorem for probabilistic coupling is the following:

\begin{Thm}[Strassen Theorem]\label{prob-strassen} Let $\mu_1,\mu_2$ be sub-distributions over $\mathcal{A}_1, \mathcal{A}_2$, respectively. Then \begin{equation}\label{strassen}
\mu_1\mathcal{R}^{\#}\mu_2\Rightarrow\forall S\subseteq\mathcal{A}_1.\ \mu_1(S)\leq\mu_2(\mathcal{R}(S))
\end{equation} where $\mathcal{R}(S)$ is the image of $S$ under $\mathcal{R}$:
$\mathcal{R}(S)=\{a_2\in\mathcal{A}_2|\exists a_1\in S\ {\rm s.t.}\ (a_1,a_2)\in\mathcal{R}\}.$ The converse of (\ref{strassen}) holds if $|\mu_1|=|\mu_2|.$
\end{Thm}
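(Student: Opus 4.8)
The plan is to prove the two directions separately, with the forward implication being a direct computation and the converse reducing to a max-flow/min-cut argument.

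For the forward direction, suppose $\mu_1\mathcal{R}^{\#}\mu_2$ with witness coupling $\mu$ over $\mathcal{A}_1\times\mathcal{A}_2$, so that $\pi_1(\mu)=\mu_1$, $\pi_2(\mu)=\mu_2$, and $\mathrm{supp}(\mu)\subseteq\mathcal{R}$. Fix $S\subseteq\mathcal{A}_1$. I would expand $\mu_1(S)=\sum_{a_1\in S}\sum_{a_2\in\mathcal{A}_2}\mu(a_1,a_2)$ using the first marginal condition. The key observation is that whenever $\mu(a_1,a_2)>0$ with $a_1\in S$, the support condition forces $(a_1,a_2)\in\mathcal{R}$, hence $a_2\in\mathcal{R}(S)$. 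Thus the double sum is supported on $S\times\mathcal{R}(S)$, and dropping the constraint $a_1\in S$ while applying the second marginal gives $\mu_1(S)\leq\sum_{a_2\in\mathcal{R}(S)}\mu_2(a_2)=\mu_2(\mathcal{R}(S))$.

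For the converse, assume $|\mu_1|=|\mu_2|$ and that $\mu_1(S)\leq\mu_2(\mathcal{R}(S))$ holds for every $S$. I would construct the desired coupling as a feasible flow in a capacitated network: a source $s$ joined to each $a_1\in\mathcal{A}_1$ by an edge of capacity $\mu_1(a_1)$; each $a_2\in\mathcal{A}_2$ joined to a sink $t$ by an edge of capacity $\mu_2(a_2)$; and for each $(a_1,a_2)\in\mathcal{R}$ an edge $a_1\to a_2$ of infinite capacity. Setting $\mu(a_1,a_2)$ to be the flow along the edge $(a_1,a_2)$ in a maximum flow, the three defining properties of a witness follow once I show the maximum flow value equals $|\mu_1|=|\mu_2|$: this forces every source and every sink edge to be saturated, so saturation of the source edges yields $\pi_1(\mu)=\mu_1$, saturation of the sink edges yields $\pi_2(\mu)=\mu_2$, and the fact that the only internal edges are $\mathcal{R}$-edges gives $\mathrm{supp}(\mu)\subseteq\mathcal{R}$. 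By max-flow/min-cut it then suffices to show the minimum cut equals $|\mu_1|$. Since $\mathcal{R}$-edges have infinite capacity, any finite cut omits all of them and is therefore determined by the set $T\subseteq\mathcal{A}_1$ of source nodes left on the source side; blocking every path $s\to a_1\to a_2\to t$ with $a_1\in T$ forces every sink edge out of $\mathcal{R}(T)$ into the cut, giving capacity $(|\mu_1|-\mu_1(T))+\mu_2(\mathcal{R}(T))$. The hypothesis $\mu_1(T)\leq\mu_2(\mathcal{R}(T))$ makes this at least $|\mu_1|$, while $T=\emptyset$ attains $|\mu_1|$ (using $\mathcal{R}(\emptyset)=\emptyset$); hence the minimum cut, and therefore the maximum flow, equals $|\mu_1|$.

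I expect the main obstacle to be the countably infinite case, where the finite max-flow/min-cut theorem does not apply directly. There I would either pass to finite truncations of $\mathcal{A}_1,\mathcal{A}_2$ and take a limit, exploiting compactness of the set of sub-couplings with the prescribed marginal bounds, or invoke an infinite transportation/feasibility result. Care is needed because in infinite bipartite settings a Hall-type condition alone need not guarantee a perfect matching, so it is the weighted (flow) formulation together with the equal-weight hypothesis $|\mu_1|=|\mu_2|$ that keeps the argument valid.
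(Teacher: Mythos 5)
Your proposal is correct and is essentially the proof the paper relies on: the paper does not prove Theorem~\ref{prob-strassen} itself, citing Strassen and Hsu, and its closing remark describes exactly your argument --- a direct marginal computation for the forward implication and a flow network with source/sink edges capacitated by $\mu_1,\mu_2$ and infinite-capacity $\mathcal{R}$-edges, resolved by max-flow/min-cut (a special case of LP duality, paralleling the SDP duality used for the quantum version). Your cut analysis and the acknowledgment that the countably infinite case needs a truncation/compactness step are both sound.
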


In this section, we prove a quantum generalisation of the above Strassen Theorem. For this purpose,
for any subspace $\cX$ of $\cH_1\otimes\cH_2$, we use $P_\cX$ and $P_{\cX^\bot}$ to denote the projections on $\cX$ and $\cX^\bot$ (the ortho-complement of $\cX$), respectively. We use $I_1,I_2, I_{12}$ to denote the identity matrix of $\mathcal{H}_1,\mathcal{H}_2,\mathcal{H}_{12}$, respectively. $\<\cdot,\cdot\>$ is employed to denote the inner product of matrices living in the same space,
\begin{equation*}
\<A,B\>=\tr (A^{\dag}B)
\end{equation*}
Then a quantum Strassen theorem can be stated as follows:

\begin{Thm}[Quantum Strassen Theorem]\label{q-strassen} For any two partial density operators $\rho_1$ in $\cH_1$ and  $\rho_2$ in $\cH_2$ with $\tr(\rho_1) = \tr(\rho_2)$, and for any subspace $\cX$ of $\cH_1\otimes\cH_2$, the following three statements are equivalent:
\begin{itemize}
\item[1.] $\rho_1\cX^\#\rho_2$;
\item[2.] For all observables (Hermitian operators) $Y_1$ in $\cH_1$ and $Y_2$ in $\cH_2$ satisfying $P_{\cX^\bot}\ge Y_1\otimes I_2 - I_1\otimes Y_2$, it holds that
\begin{equation}\label{eq-strassen}\tr(\rho_1Y_1)\leq\tr(\rho_2Y_2).\end{equation}
\item[3.] For all positive observables $Y_1$ in $\cH_1$ and $Y_2$ in $\cH_2$ satisfying $P_{\cX^\bot}\ge Y_1\otimes I_2 - I_1\otimes Y_2$, it holds that
$\tr(\rho_1Y_1)\leq\tr(\rho_2Y_2).$
\end{itemize}
\end{Thm}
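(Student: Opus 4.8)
The plan is to establish the cycle $1 \Rightarrow 2 \Rightarrow 3 \Rightarrow 1$. The first implication is a one-line computation, the second is immediate, and essentially all the work lies in $3 \Rightarrow 1$, which I would obtain from semidefinite programming duality together with a shifting trick that consumes the hypothesis $\tr(\rho_1) = \tr(\rho_2)$.

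For $1 \Rightarrow 2$, suppose $\rho$ is a coupling for $(\rho_1,\rho_2)$ with $\mathrm{supp}(\rho) \subseteq \cX$. Since $\tr_2(\rho) = \rho_1$ and $\tr_1(\rho) = \rho_2$, I can rewrite $\tr(\rho_1 Y_1) = \ip{\rho}{Y_1 \otimes I_2}$ and $\tr(\rho_2 Y_2) = \ip{\rho}{I_1 \otimes Y_2}$, so that $\tr(\rho_1 Y_1) - \tr(\rho_2 Y_2) = \ip{\rho}{Y_1 \otimes I_2 - I_1 \otimes Y_2}$. The constraint $P_{\cX^\bot} \ge Y_1 \otimes I_2 - I_1 \otimes Y_2$ together with $\rho \ge 0$ gives $\ip{\rho}{Y_1 \otimes I_2 - I_1 \otimes Y_2} \le \ip{\rho}{P_{\cX^\bot}}$, while $\mathrm{supp}(\rho) \subseteq \cX$ forces $\ip{\rho}{P_{\cX^\bot}} = 0$; hence $\tr(\rho_1 Y_1) \le \tr(\rho_2 Y_2)$. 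The implication $2 \Rightarrow 3$ is trivial, since positive observables are a special case of Hermitian observables, so statement 2 quantifies over a larger family.

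For $3 \Rightarrow 1$ I would argue by contraposition through the semidefinite program
\[
  p^\ast = \min\{\ip{P_{\cX^\bot}}{\rho} : \tr_2(\rho) = \rho_1,\ \tr_1(\rho) = \rho_2,\ \rho \ge 0\}.
\]
Its feasible set is nonempty (it contains $\rho_1 \otimes \rho_2 / \tr(\rho_1)$) and compact, and since $P_{\cX^\bot} \ge 0$ one has $p^\ast \ge 0$, with $p^\ast = 0$ if and only if some feasible $\rho$ satisfies $\ip{P_{\cX^\bot}}{\rho} = 0$, i.e.\ $\mathrm{supp}(\rho) \subseteq \cX$ — which is exactly statement 1. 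Dualising the two marginal constraints with Hermitian multipliers $Y_1, Y_2$ and eliminating $\rho \ge 0$ yields, after the substitution $Y_2 \mapsto -Y_2$, the dual
\[
  d^\ast = \max\{\tr(\rho_1 Y_1) - \tr(\rho_2 Y_2) : P_{\cX^\bot} \ge Y_1 \otimes I_2 - I_1 \otimes Y_2\}.
\]
Taking $Y_1 = Y_2 = 0$ shows $d^\ast \ge 0$, and statement 2 is precisely the assertion $d^\ast \le 0$, hence $d^\ast = 0$. Thus once I know $p^\ast = d^\ast$, statement 1 ($p^\ast = 0$), statement 2 ($d^\ast = 0$), and all three, coincide. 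If statement 1 fails then $p^\ast > 0$, so $d^\ast > 0$ and there is a Hermitian feasible pair $(Y_1, Y_2)$ with $\tr(\rho_1 Y_1) - \tr(\rho_2 Y_2) > 0$. Replacing $(Y_1, Y_2)$ by $(Y_1 + cI_1, Y_2 + cI_2)$ leaves both the constraint and the objective unchanged — the constraint because the two $cI_{12}$ terms cancel, and the objective because $\tr(\rho_1) = \tr(\rho_2)$ — so choosing $c$ larger than the spectral radii of $Y_1, Y_2$ produces a positive pair still violating the inequality, contradicting statement 3.

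The main obstacle is establishing strong duality $p^\ast = d^\ast$, for which I would verify Slater's condition: the dual is strictly feasible, since $Y_1 = -cI_1$, $Y_2 = 0$ gives $P_{\cX^\bot} - (Y_1 \otimes I_2 - I_1 \otimes Y_2) = P_{\cX^\bot} + cI_{12} > 0$ for any $c > 0$. Together with primal feasibility this yields $p^\ast = d^\ast$ by the standard semidefinite duality theorem. The only other points requiring care are getting the signs right when passing to the dual, so that the constraint takes the exact form $P_{\cX^\bot} \ge Y_1 \otimes I_2 - I_1 \otimes Y_2$ of the statement, and observing that the identity-shift invariance of the objective is precisely where the hypothesis $\tr(\rho_1) = \tr(\rho_2)$ is used.
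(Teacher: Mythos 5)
Your proof is correct and follows essentially the same route as the paper: both hinge on strong duality (verified via Slater's condition) for the SDP over couplings with the two marginal constraints, and both use the shift $Y_i \mapsto Y_i + cI_i$, which is exactly where the hypothesis $\tr(\rho_1)=\tr(\rho_2)$ enters. The only difference is organizational: the paper proves $2\Rightarrow 1$ using the objective $\ip{P_\cX}{X}$ and handles $3\Rightarrow 2$ separately by the shift, whereas you fold the shift into a direct contrapositive proof of $3\Rightarrow 1$ with the complementary objective $\ip{P_{\cX^\bot}}{\rho}$ --- an affine reparametrization of the same program.
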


\begin{proof} $(\mathit{1}\Rightarrow \mathit{2})$ Suppose $\rho$ is a witness of the lifting $\rho_1\cX^\#\rho_2$. Then for all observables (Hermition operators) $Y_1$ in $\cH_1$ and $Y_2$ in $\cH_2$, if $P_{\cX^\bot}\ge Y_1\otimes I_2 - I_1\otimes Y_2$, then we have:
\begin{align}\label{eq-1}
\tr(\rho_1Y_1) &= \tr(\rho (Y_1\otimes I_2)) \\
\label{eq-2}&\leq \tr(\rho (P_{\cX^\bot} + I_1\otimes Y_2))\\
\label{eq-3}&= \tr(\rho (I_1\otimes Y_2)) \\
\label{eq-4}&= \tr(\rho_2 Y_2).
\end{align}
Equalities (\ref{eq-1}) and (\ref{eq-4}) are derived from the condition that $\rho$ is a coupling for $(\rho_1,\rho_2)$; that is, $\tr_2(\rho)=\rho_1$ and $\tr_1(\rho)=\rho_2$, (\ref{eq-2}) is due to the assumption for $Y_1$ and $Y_2$, and (\ref{eq-3}) is trivial as $\mathrm{supp}(\rho)\subseteq \cX$, so $\tr(\rho P_{\cX^\bot}) = 0$.

{\vskip 4pt}

$(\mathit{2} \Rightarrow \mathit{1})$ Let us first define the semidefinite program $(\Phi, A, B)$:
\begin{center}
  \begin{minipage}{2in}
    \centerline{\underline{Primal problem}}\vspace{-7mm}
    \begin{align*}
      \text{maximize:}\quad & \ip{A}{X}\\
      \text{subject to:}\quad & \Phi(X) = B,\\
      & X\in\pos{\cH_1\otimes\cH_2}.
    \end{align*}
  \end{minipage}
  \hspace*{5mm}
  \begin{minipage}{2in}
    \centerline{\underline{Dual problem}}\vspace{-7mm}
    \begin{align*}
      \text{minimize:}\quad & \ip{B}{Y}\\
      \text{subject to:}\quad & \Phi^{\ast}(Y) \geq A,\\
      & Y\in\mathrm{Herm}\left(\cH_1\oplus\cH_2\right).
    \end{align*}
  \end{minipage}
\end{center}
where:
\begin{align*}
&A = P_\cX,\quad B = \left[
\begin{array}{cc}
\rho_1 &  \\
 & \rho_2 \\
\end{array}
\right], \\
&\Phi(X) = \left[
\begin{array}{cc}
\tr_2(X) &  \\
 & \tr_1(X) \\
\end{array}
\right], \\
&\Phi^{\ast}(Y) = \Phi^{\ast}\left[
\begin{array}{cc}
Y_1 & \cdot \\
\cdot & Y_2 \\
\end{array}
\right]
= Y_1\otimes I_2 + I_1\otimes Y_2.\\
\end{align*}
To show that the above problems are actually primal and dual, respectively, we only need to check the following equality:
\begin{align*}
\forall\ M,\ N,\ \ip{\Phi(M)}{N} &= \tr(\tr_2(M)N_1+\tr_1(M)N_2) \\
&=\tr(M(N_1\otimes I_2) + M(I_1\otimes N_2)) \\
&=\ip{M}{\Phi^{\ast}(N)}.
\end{align*}
Moreover, the strong duality holds for this semidefinite program as we can check that the primal feasible set are not empty and there exists a Hermitian operator $Y$ for which $\Phi^{\ast}(Y) > A$:
\begin{align*}
&\text{Primal feasible set\ } \cA = \left\{X\in\pos{\cH_1\otimes\cH_2}: \Phi(X) = B\right\} \ni {\frac{1}{\tr(\rho_1)}\rho_1\otimes\rho_2} \\
&\text{Choose\ }Y = I_1\oplus I_2 \in\mathrm{Herm}\left(\cH_1\oplus\cH_2\right),\ \Phi^{\ast}(Y) = 2I_{12} > P_\cX.
\end{align*}
So, $\max\ip{P_R}{X} = \min\ip{B}{Y} = \min\{\ip{\rho_1}{Y_1}+\ip{\rho_2}{Y_2}\}$.
Now, let us consider the following condition:
\begin{itemize}
\item[] \textbf{(A)}: For all observable (Hermitian operators) $Y_1$ in $\cH_1$ and $Y_2$ in $\cH_2$ satisfy $Y_1\otimes I_2 + I_1\otimes Y_2 \geq P_\cX$, then
$$\ip{B}{Y} = \tr(\rho_1Y_1+\rho_2Y_2) \geq \tr\rho_1.$$
\end{itemize}
If condition (A) holds, then $\min\ip{B}{Y} \geq \tr\rho_1$. Still remember that $\max\ip{P_\cX}{X} \leq \tr X  = \tr\rho_1$. Due to the strong duality, we have $\max\ip{P_\cX}{X} = \tr\rho_1$. So, $X_{max}$ which maximizes $\ip{P_\cX}{X}$ must satisfy $\ip{P_\cX}{X_{max}} = \tr\rho_1 = \tr X_{max}$. Consequently, $\mathrm{supp}{X_{max}}\subseteq \cX$; in other words, $X_{max}$ is a witness of $\rho_1\cX^\#\rho_2$. Therefore, $\text{condition (A)} \Rightarrow\rho_1R^\#\rho_2.$
On the other hand, condition (A) is equivalent to statement $\mathit{2}$ of the theorem. Indeed, this is not difficult to prove as if we replace $Y_1^\prime = I_1-Y_1$ in condition (A), then
\begin{align*}
Y_1 \in \mathrm{Herm}(\cH_1) &\Longleftrightarrow  Y_1^\prime \in \mathrm{Herm}(\cH_1) \\
Y_1\otimes I_2 + I_1\otimes Y_2 \geq P_\cX &\Longleftrightarrow
I_1\otimes I_2 - P_\cX \geq Y_1^\prime\otimes I_2 - I_1\otimes Y_2 \\
&\Longleftrightarrow
P_\cX^\bot \geq Y_1^\prime\otimes I_2 - I_1\otimes Y_2 \\
\tr(\rho_1Y_1+\rho_2Y_2) \geq \tr\rho_1 &\Longleftrightarrow
\tr(\rho_2Y_2) \geq \tr(\rho_1I_1) - \tr(\rho_1(I_1-Y_1^\prime)) \\
&\Longleftrightarrow
\tr(\rho_2Y_2) \geq \tr(\rho_1Y_1^\prime).
\end{align*}
From the above, we can directly derive statement $\mathit{2}$. In summary, we have:
$\text{statement }\mathit{2} \Leftrightarrow \text{condition (A)} \Rightarrow\rho_1R^\#\rho_2.$

{\vskip 4pt}

$(\mathit{2} \Rightarrow \mathit{3})$ Obvious.

{\vskip 4pt}

$(\mathit{3} \Rightarrow \mathit{2})$ We only need to show that, for any two observables $Y_1$ in $\cH_1$ and $Y_2$ in $\cH_2$ satisfy $P_\cX^\bot\ge Y_1\otimes I_2 - I_1\otimes Y_2$, there exist two positive observables $Y_1^\prime$ in $\cH_1$ and $Y_2^\prime$ in $\cH_2$ such that $P_\cX^\bot\ge Y_1^\prime\otimes I_2 - I_1\otimes Y_2^\prime$ and
$$\tr(\rho_1Y_1)\leq\tr(\rho_2Y_2)  \Longleftrightarrow \tr(\rho_1Y_1^\prime)\leq\tr(\rho_2Y_2^\prime).$$
Note that $Y_1$ and $Y_2$ are Hermitian, so their eigenvalues are real, and we can define $\lambda = \min\{\mathrm{eigenvalues\ of\ }Y_1 \text{\ and\ } Y_2\}$. Choose $Y_1^\prime = Y_1 - \lambda I_1$ and $Y_2^\prime = Y_2 - \lambda I_2$. Obviously,  $Y_1^\prime$ and $Y_2^\prime$ are positive observables, and satisfy
\begin{align*}
P_\cX^\bot &\ge Y_1\otimes I_2 - I_1\otimes Y_2 \\
&= Y_1\otimes I_2 - \lambda I_1 \otimes I_2 + \lambda I_1 \otimes I_2 - I_1\otimes Y_2 \\
&= Y_1^\prime\otimes I_2 - I_1\otimes Y_2^\prime.
\end{align*}
Moreover, as $\tr(\rho_1) = \tr(\rho_2)$, we have
\begin{align*}
\tr(\rho_1Y_1)\leq\tr(\rho_2Y_2) &\Longleftrightarrow
\tr(\rho_1Y_1) - \tr(\rho_1\lambda I_1)\leq\tr(\rho_2Y_2) - \tr(\rho_2\lambda I_2) \\
&\Longleftrightarrow
\tr(\rho_1Y_1^\prime)\leq\tr(\rho_2Y_2^\prime).
\end{align*}
\end{proof}

{\textbf{Remark:}In the above proof, it is indeed naturally to employing the methods of semidefinite programming. In \cite{Hsu17}, Hsu deliberately constructs a flow network, and then using the max-flow min-cut theorem to prove the Strassen theorem in the finite case.  Essentially, the max-flow min-cut theorem is a special case of the duality theorem for linear programs (LP). Considering the fact that quantum states, quantum operations and so on are all described by matrices, similar to LP, semi-definite programming (SDP) is a powerful and widely used method of convex optimization in quantum theory. Indeed, when all matrices appeared in a SDP are diagonal, then the SDP reduces to LP. In the following section, we will see that in the degenerate case, quantum Strassen theorem also reduces to the classical Strassen theorem.}

\section{Classical Reduction of Quantum Strassen Theorem}

At the first glance, Theorem \ref{prob-strassen} (Strassen Theorem for Probabilistic Coupling) and Theorem \ref{q-strassen} (Quantum Strassen Theorem) are very different. In this section, we show that Theorem \ref{q-strassen} is indeed a quantum generalisation of Theorem \ref{prob-strassen}. 

To this end, let $\mu_1$ be a {sub-}distribution over $[m]$ ($[m] = \{i\in \mathbb{N}\ \big| \ 1\le i\le m\}$) and $\mu_2$ over $[n]$. {And the corresponding degenerate partial density operators (quantum states) are}:
\begin{align*}
\rho_1 = \left[
\begin{array}{cccc}
\mu_1(1) & & & \\
& \mu_1(2) & & \\
& & \ddots & \\
& & & \mu_1(m) \\
\end{array}
\right], \quad
\rho_2 = \left[
\begin{array}{cccc}
\mu_2(1) & & & \\
& \mu_2(2) & & \\
& & \ddots & \\
& & & \mu_2(n) \\
\end{array}
\right]
\end{align*} in $\cH_1=\mathrm{span}\{|i\rangle: i\in [m]\}$ and $\cH_2=\mathrm{span}\{|j\rangle:j\in [n]\}$, respectively.
Furthermore, let $R\subseteq\{(i,j) \big| i\in[m],\ j\in[n]\}$ be a classical relation from $[m]$ to $[n]$. Then the corresponding (quantum relation) subspace of $\cH_1\otimes\cH_2$ is defined as
$$\cX_R = \mathrm{span}\{|i\>|j\>\ \big|\ (i,j)\in R\}.$$

{Based on the above definition of the degenerate case, in the rest part of this section, Proposition \ref{Prop1} shows that the left hand side of Eqn.(\ref{strassen}) in Theorem \ref{prob-strassen} is equivalent to the statement $\mathit{1}$ in Theorem \ref{q-strassen}, while Proposition \ref{Prop2} states the equivalence of the right hand side of Eqn.(\ref{strassen}) in Theorem \ref{prob-strassen} and the statement $\mathit{3}$ in Theorem \ref{q-strassen}, concluding that Theorem \ref{prob-strassen} (Strassen Theorem) is indeed a reduction of Theorem \ref{q-strassen} (Quantum Strassen Theorem).}

The following proposition indicates that probabilistic lifting is a special case of quantum lifting.
\begin{Prop}
\label{Prop1}
 $\mu_1R^\#\mu_2 \Longleftrightarrow \rho_1\cX_R^\#\rho_2.$
\end{Prop}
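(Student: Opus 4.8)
The plan is to prove the two directions by a pair of dual, constructive passages between classical and quantum couplings: to go from a probabilistic coupling to a quantum one I would \emph{diagonalise} the given witness, and to go back I would take the diagonal part of the quantum witness in the computational product basis $\{|ij\rangle\}$. The whole argument rests on the facts that $\rho_1,\rho_2$ are diagonal in the respective computational bases and that $\cX_R$ is spanned by the product basis vectors $|i\rangle|j\rangle$ with $(i,j)\in R$, so that passing between the diagonal of a matrix and the classical sub-distribution it encodes loses none of the relevant data.

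For the direction $(\Rightarrow)$, I would assume $\mu_1 R^\#\mu_2$ with witness $\mu$ on $[m]\times[n]$ and set
$$\rho=\sum_{(i,j)\in R}\mu(i,j)\,|ij\rangle\langle ij|,$$
which is a partial density operator since $\mu\ge 0$ and $\tr(\rho)=|\mu|=|\mu_1|\le 1$. A direct computation of the partial traces of this diagonal operator gives $\tr_2(\rho)=\sum_i\big(\sum_j\mu(i,j)\big)|i\rangle\langle i|=\sum_i\pi_1(\mu)(i)|i\rangle\langle i|=\rho_1$, and symmetrically $\tr_1(\rho)=\rho_2$, so $\rho$ is a coupling for $(\rho_1,\rho_2)$. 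Since $\mu(i,j)>0$ forces $(i,j)\in\mathrm{supp}(\mu)\subseteq R$, every basis vector carrying nonzero weight lies in $\cX_R$, whence $\mathrm{supp}(\rho)\subseteq\cX_R$ and $\rho$ witnesses $\rho_1\cX_R^\#\rho_2$.

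For the direction $(\Leftarrow)$, I would assume $\rho_1\cX_R^\#\rho_2$ with quantum witness $\rho$, which may be entangled and carry off-diagonal entries, and define the candidate classical coupling by reading off the diagonal, $\mu(i,j)=\langle ij|\rho|ij\rangle$. Positivity of $\rho$ makes each $\mu(i,j)\ge 0$, and $\sum_{i,j}\mu(i,j)=\tr(\rho)=\tr(\rho_1)=|\mu_1|\le 1$, so $\mu$ is a sub-distribution. For the marginals I would compute $\pi_1(\mu)(i)=\sum_j\langle ij|\rho|ij\rangle=\langle i|\tr_2(\rho)|i\rangle=\langle i|\rho_1|i\rangle=\mu_1(i)$, where the final equality is exactly the place diagonality of $\rho_1$ is invoked; $\pi_2(\mu)=\mu_2$ is symmetric.

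The step I expect to be the crux is transferring the support condition, and here the special form of $\cX_R$ is essential. Because $\cX_R$ is the span of an orthonormal subfamily of the product basis, its orthocomplement is $\cX_R^\bot=\mathrm{span}\{|ij\rangle:(i,j)\notin R\}$. Hence, for any $(i,j)\notin R$, the vector $|ij\rangle$ lies in $\cX_R^\bot\subseteq\mathrm{supp}(\rho)^\bot=\ker\rho$, so $\rho|ij\rangle=0$ and $\mu(i,j)=\langle ij|\rho|ij\rangle=0$. This yields $\mathrm{supp}(\mu)\subseteq R$ and completes the witness for $\mu_1 R^\#\mu_2$. The only point demanding care is that we discard the off-diagonal and any genuinely entangled content of $\rho$; this is harmless precisely because the target marginals $\rho_1,\rho_2$ are diagonal, so the diagonal of $\rho$ already carries everything the classical coupling requires.
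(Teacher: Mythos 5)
Your proposal is correct and follows essentially the same route as the paper: in both directions you construct the same witnesses (the diagonal operator $\sum\mu(i,j)|ij\rangle\langle ij|$ one way, the diagonal readout $\mu(i,j)=\langle ij|\rho|ij\rangle$ the other) and verify marginals and support in the same manner. The only cosmetic difference is that the paper certifies $\mathrm{supp}(\rho)\subseteq\cX_R$ by computing $\tr(\rho P_{\cX_R})=\tr(\rho)$, whereas you observe it directly from the form of $\rho$; both are fine.
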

\begin{proof} ($\Rightarrow$) Suppose that there is a witness $\mu$ of the lifting $\mu_1R^\#\mu_2$. We define
the {partial} density operator:
$$
\rho:\<i|\<j|\rho|i^\prime\>|j^\prime\> = \left\{
\begin{array}{ll}
\mu(i,j) & i=i^\prime,\ j=j^\prime\\
0 & i\neq i^\prime \text{\ or\ } j\neq j^\prime
\end{array}
\right..
$$
It is easy to check:
\begin{align*}
&\<i|\tr_2(\rho)|i^\prime\> = \sum_{j = 1}^{n}\<i|\<j|\rho|i^\prime\>|j\> = \left\{
\begin{array}{ll}
\sum_{j = 1}^{n}\mu(i,j)=\mu_1(i) & i=i^\prime\\
0 & i\neq i^\prime
\end{array}
\right., \\
&\<j|\tr_2(\rho)|j^\prime\> = \sum_{i = 1}^{n}\<i|\<j|\rho|i\>|j^\prime\> = \left\{
\begin{array}{ll}
\sum_{i = 1}^{m}\mu(i,j)=\mu_2(j) & j=j^\prime\\
0 & j\neq j^\prime
\end{array}
\right.. \\
\end{align*}
So, $\tr_2(\rho) = \rho_1$ and $\tr_1(\rho) = \rho_2$; that is, $\rho$ is a coupling for $(\rho_1,\rho_2)$. Furthermore, we have:
\begin{align*}
\tr(\rho P_{\cX_R}) &= \sum_{(i,j)\in R}\<i|\<j|\rho|i\>|j\> \\
&= \sum_{(i,j)\in R}\mu(i,j) \\
&= \sum_{(i,j)\in R}\mu(i,j) + \sum_{(i,j)\notin R}\mu(i,j) \\
&= \tr(\rho) \\
\end{align*}
Thus, $\mathrm{supp}(\rho)\subseteq \cX_R$, and $\rho$ is a witness of the quantum lifting $\rho_1\cX_R^\#\rho_2$.

{\vskip 4pt}

($\Leftarrow$) Suppose there is a witness $\rho$ of the quantum lifting $\rho_1\cX_R^\#\rho_2$. Let us construct
the joint {sub-}distribution $\mu$:
$$
\mu(i,j) = \<i|\<j|\rho|i\>|j\>\ {\rm for\ all}\ i,j.
$$
It is easy to check:
\begin{align*}
&\sum_{j = 1}^{n}\mu(i,j) = \sum_{j = 1}^{n}\<i|\<j|\rho|i\>|j\> = \<i|\rho_1|i\> = \mu_1(i), \\
&\sum_{i = 1}^{m}\mu(i,j) = \sum_{i = 1}^{m}\<i|\<j|\rho|i\>|j\> = \<j|\rho_2|j\> = \mu_2(j).
\end{align*}
Also, if $(i,j)\notin R$, then $|i\>|j\>\perp \cX_R$, then
$$
\mu(i,j) = \<i|\<j|\rho|i\>|j\> = \tr(\rho|i\>|j\>\<i|\<j|) = 0
$$
as $\mathrm{supp}(\rho)\subseteq \cX_R$. Thus, $\mathrm{supp}(\mu)\subseteq R$, and $\mu$ is a witness of the lifting $\mu_1R^\#\mu_2$.
\end{proof}

The following proposition further shows that in the degenerate case,inequality (\ref{eq-strassen}) to (\ref{strassen}).
Surprisingly, such a reduction can be realized even without the condition of lifting.
\begin{Prop}
\label{Prop2} Two statements are equivalent:
\begin{itemize}
\item[1.] For any $S\subseteq[m]$, $\mu_1(S)\leq\mu_2(R(S))$;
\item[2.] For all positive observables $Y_1$ in $\cH_1$ and $Y_2$ in $\cH_2$ satisfy $P_{\cX_R}^\bot\ge Y_1\otimes I_2 - I_1\otimes Y_2$, then
$$\tr(\rho_1Y_1)\leq\tr(\rho_2Y_2)$$
\end{itemize}
\end{Prop}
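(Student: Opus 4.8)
The plan is to exploit the fact that, in this degenerate setting, all the operators that matter---$\rho_1$, $\rho_2$, $P_{\cX_R}$ and $P_{\cX_R}^\bot$---are diagonal in the computational bases $\{|i\rangle\}$ and $\{|j\rangle\}$, so the operator inequality in statement 2 can be read off entrywise. The first step is to record the dictionary between events and observables: for $S\subseteq[m]$ write $\Pi_S=\sum_{i\in S}|i\rangle\langle i|$ and for $R(S)\subseteq[n]$ write $\Pi_{R(S)}=\sum_{j\in R(S)}|j\rangle\langle j|$; then $\mu_1(S)=\tr(\rho_1\Pi_S)$ and $\mu_2(R(S))=\tr(\rho_2\Pi_{R(S)})$. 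This recasts statement 1 as the assertion $\tr(\rho_1\Pi_S)\le\tr(\rho_2\Pi_{R(S)})$ for every $S$, i.e.\ exactly the restriction of statement 2 to pairs of coordinate projections.

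For the direction $2\Rightarrow 1$ I would simply feed these projections into statement 2. Fixing $S$, set $Y_1=\Pi_S$ and $Y_2=\Pi_{R(S)}$, which are positive observables, and check the constraint $P_{\cX_R}^\bot\ge Y_1\otimes I_2-I_1\otimes Y_2$. Since every operator involved is diagonal, this reduces to the scalar inequalities $[(i,j)\notin R]\ge[i\in S]-[j\in R(S)]$ over all $(i,j)$. The only nontrivial case is when the right-hand side equals $1$, i.e.\ $i\in S$ and $j\notin R(S)$; but then $(i,j)\in R$ would force $j\in R(S)$ by definition of the image, so $(i,j)\notin R$ and the left-hand side is $1$ as well. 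Hence the constraint holds and statement 2 yields $\mu_1(S)\le\mu_2(R(S))$.

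The substantive direction is $1\Rightarrow 2$, where $Y_1,Y_2$ are arbitrary positive operators, neither diagonal nor projections. I would overcome this in two moves. First, a diagonalisation (pinching) reduction: let $\Delta_1,\Delta_2$ be the dephasing channels that delete off-diagonal entries in the two bases and put $\Delta=\Delta_1\otimes\Delta_2$. Because $\rho_1,\rho_2$ are diagonal we have $\tr(\rho_1Y_1)=\tr(\rho_1\Delta_1(Y_1))$ and $\tr(\rho_2Y_2)=\tr(\rho_2\Delta_2(Y_2))$; and because $\Delta$ is a positive unital map fixing the diagonal $P_{\cX_R}^\bot$ while sending $Y_1\otimes I_2$ and $I_1\otimes Y_2$ to $\Delta_1(Y_1)\otimes I_2$ and $I_1\otimes\Delta_2(Y_2)$, applying $\Delta$ to the constraint preserves it. So it suffices to treat diagonal $Y_1=\sum_i y_i|i\rangle\langle i|$ and $Y_2=\sum_j z_j|j\rangle\langle j|$ with $y_i,z_j\ge 0$.

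Second, a layer-cake argument. Reading the diagonal constraint on the support of $R$ now gives $z_j\ge y_i$ whenever $(i,j)\in R$. Setting $S_t=\{i:y_i>t\}$ and $T_t=\{j:z_j>t\}$, this immediately yields the key containment $R(S_t)\subseteq T_t$ for every $t\ge 0$: if $j\in R(S_t)$, pick $i\in S_t$ with $(i,j)\in R$, so $z_j\ge y_i>t$. Using $\tr(\rho_1Y_1)=\int_0^\infty\mu_1(S_t)\,dt$ and $\tr(\rho_2Y_2)=\int_0^\infty\mu_2(T_t)\,dt$ together with statement 1 and nonnegativity of $\mu_2$, I obtain $\mu_1(S_t)\le\mu_2(R(S_t))\le\mu_2(T_t)$ pointwise in $t$, and integrating finishes the argument. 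I expect the main obstacle to be exactly this passage from general positive observables to the combinatorial statement: the dephasing step that legitimises working with the diagonal entries, and the recognition that the part of the constraint indexed by $R$ is precisely what forces the level-set containment $R(S_t)\subseteq T_t$.
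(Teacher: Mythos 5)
Your proof is correct, and its overall strategy---reduce to diagonal observables, then run a layer-cake argument over level sets---is the same as the paper's. The differences are in execution and are worth noting. First, the paper simply asserts that ``we only need to consider those $Y_1$ and $Y_2$ which are also diagonal''; your pinching argument (applying the positive, unital dephasing map $\Delta_1\otimes\Delta_2$, which fixes $P_{\cX_R}^\bot$, $\rho_1$, $\rho_2$ and preserves the operator inequality) supplies the justification that the paper omits, so this step is a genuine improvement in rigour. Second, for the hard direction the paper isolates a separate Lemma (its Lemma~\ref{Lemma1}) comparing $\{0,1\}$-valued and nonnegative test functions, decomposes $Y_1$ as a positive combination of a strictly decreasing chain of indicator matrices, constructs a minimal feasible $Y_{2min}$, and proves $Y_2\ge Y_{2min}$ entrywise by a case analysis on $k_t=\max\{k: t\in T_k\}$. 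You instead take level sets of \emph{both} observables, $S_t=\{i: y_i>t\}$ and $T_t=\{j: z_j>t\}$, derive the containment $R(S_t)\subseteq T_t$ directly from the on-$R$ part of the constraint, and integrate $\mu_1(S_t)\le\mu_2(R(S_t))\le\mu_2(T_t)$; this avoids the $Y_{2min}$ construction and its case analysis entirely and is cleaner, though it is the same underlying decomposition. Your direction $2\Rightarrow 1$ (plugging in the projections $\Pi_S$ and $\Pi_{R(S)}$ and checking the entrywise constraint) matches the paper's trivial direction $2'\Rightarrow 1'$ combined with its translation between events and $\{0,1\}$-observables. Both arguments, like the paper's, correctly use only the constraints indexed by $R$ in the substantive direction; the off-$R$ constraints $z_j\ge y_i-1$ are never needed there.
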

\begin{proof}
As $\rho_1$, $\rho_2$ and {$P_{\cX_R}$} are diagonal density operators, so we only need to consider those $Y_1$ and $Y_2$ which are also diagonal.
We use the notation $Y_{1,i} = (Y_1)_{ii}$ and $Y_{2,j} = (Y_2)_{jj}$ for simplicity. Then it holds that
\begin{align*}
P_{\cX_R}^\bot\ge Y_1\otimes I_2 - I_1\otimes Y_2
\Longleftrightarrow \forall\ i,j\ \left\{
\begin{array}{ll}
Y_{2,j} \ge Y_{1,i} & (i,j)\in R \\
Y_{2,j} \ge Y_{1,i} - 1 & (i,j)\notin R \\
\end{array}
\right. \\
\end{align*}
Now we need a technical lemma:
 \begin{Lem}
 \label{Lemma1}
 The following two statements are equivalent:
 \begin{itemize}
 \item[$1^\prime$.]
 If $Z_{1,i}\in\{0,1\}$, $Z_{2,j}\in\{0,1\}$, then
 $$
 \forall\ i,j\ \left\{
\begin{array}{ll}
Z_{2,j} \ge Z_{1,i} & (i,j)\in R \\
Z_{2,j} \ge Z_{1,i} - 1 & (i,j)\notin R \\
\end{array}
\right. \ \Rightarrow\
\sum_{i=1}^{m}\mu_1(i)Z_{1,i}\le\sum_{j=1}^{n}\mu_2(j)Z_{2,j}
 $$
 \item[$2^\prime$.]
 If $Y_{1,i}\ge0$, $Y_{2,j}\ge0$, then
 $$
 \forall\ i,j\ \left\{
\begin{array}{ll}
Y_{2,j} \ge Y_{1,i} & (i,j)\in R \\
Y_{2,j} \ge Y_{1,i} - 1 & (i,j)\notin R \\
\end{array}
\right. \ \Rightarrow\
\sum_{i=1}^{m}\mu_1(i)Y_{1,i}\le\sum_{j=1}^{n}\mu_2(j)Y_{2,j}
 $$
 \end{itemize}
{where $Z_1, Z_2$ are also diagonal matrices, and $Z_{1,i} = (Z_1)_{ii}$, $Z_{2,j} = (Z_2)_{jj}$.}
 \end{Lem}
For readability, let us first use this lemma to finish the proof of the proposition, but postpone the proof of the lemma itself to the end of this section. As
\begin{align*}
\tr(\rho_1Y_1) &= \sum_{i=1}^{m}(\rho_1)_{ii}(Y_1)_{ii} = \sum_{i=1}^{m}\mu_1(i)Y_{1,i}, \\
\tr(\rho_2Y_2) &= \sum_{j=1}^{m}(\rho_2)_{jj}(Y_2)_{jj} = \sum_{j=1}^{n}\mu_2(j)Y_{2,j}, \\
\end{align*} it is direct to see that statement 2 of the proposition is equivalent to statement 2$^\prime$ of the above lemma.
For the statement 1$^\prime$ of the above, we can define the set $S = \{i \in [m]\ \big|\ Z_{1,i} = 1\}$ and
$T = \{j \in [n]\ \big|\ Z_{2,j} = 1\}$, then it is equivalent to:
$$
\forall\ S\subseteq [m],\ T\subseteq [n],\ R(S)\subseteq T\ \Rightarrow\ \mu_1(S)\le\mu_2(T),
$$
which is exactly the statement 1 of the proposition.
Therefore, using the above lemma, we see that statements 1 and 2 in the proposition are equivalent.
\end{proof}

Combining Propositions \ref{Prop1} and \ref{Prop2}, we see that Theorem \ref{prob-strassen} (Strassen Theorem for probabilistic coupling) is a reduction of Theorem \ref{q-strassen} (Quantum Strassen Theorem).

To conclude this section, let us present the following:

\begin{proof}[Proof of Lemma \ref{Lemma1}] (2$^\prime$ $\Rightarrow$ 1$^\prime$) This is trivial as statement 1$^\prime$ is a special case of statement 2$^\prime$.

{\vskip 4pt}

(1$^\prime$ $\Rightarrow$ 2$^\prime$) For any $Y_1$, we can construct a decreasing sequence $Z_{11}>\cdots > Z_{1k} > Z_{1(k+1)}>\cdots$ such that:
$$Y_1 = \sum_{k}\lambda_k Z_{1k}, \quad \lambda_k\ge0.$$
We further define $S_k = \{i \in [m]\ \big|\ Z_{1k,i} = 1\}$ and the corresponding $T_k = R(S_k)$.
Then, $\{S_k\}$ is a strictly decreasing sequence; that is, $S_k\supset S_{k+1}$ for all $k$, and $\{T_k\}$ is a non-increasing sequence; that is, $T_k\supseteq T_{k+1}$ for all $k$.
Let us also define $Z_{2k}$:
$$Z_{2k,j} = \left\{
\begin{array}{ll}
1 & j\in T_k \\
0 & j\notin T_k \\
\end{array}
\right.
$$
and a new operator $Y_{2min}$:
$$Y_{2min} = \sum_k\lambda_kZ_{2k}.$$
Note that any pair of $Z_{1k}$ and $Z_{2k}$ satisfy statement $\mathit{1^\prime}$. Then
\begin{align*}
\sum_{i=1}^{m}\mu_1(i)Y_{1,i} &= \sum_{i=1}^{m}\mu_1(i)\sum_k\lambda_k Z_{1k,i} =\sum_k\lambda_k\sum_{i=1}^{m}\mu_1(i)Z_{1k,i} \\
&\leq \sum_k\lambda_k\sum_{j=1}^{n}\mu_2(j)Z_{2k,j}
=\sum_{j=1}^{n}\mu_2(j)\sum_k\lambda_kZ_{2k,j} \\
&=\sum_{j=1}^{n}\mu_2(j)Y_{2min,j}
\end{align*}
Now it suffices to prove that for any $Y_2$ satisfying the condition in statement $\mathit{2^\prime}$, we have $Y_2 \ge Y_{2min}$. To show this, let us use $\mathbb{I}(\cdot)$ to represent the indication function, and consider the following two cases:
\begin{itemize}
\item Case 1: $t \notin T_1$. Then of course, $\forall\ k:\ t\notin T_k$, so,
$$Y_{2min,t} = \sum_k\lambda_kZ_{2k,t} = \sum_k\lambda_k\mathbb{I}(t\in T_k) = 0 \le Y_{2,t}$$ \\
\item Case 2: $\exists k$ such that  $t\in T_k$. Suppose $k_t = \max\{k: t\in T_k\}$. Then we have the following two facts: (1) for $k\le k_t$, $t\in T_k$ and for $k>k_t$. $t\notin T_k$; (2) $\exists s\in S_{k_t}$ such that $(s,t)\in R$, and for $k\le k_t$, $s\in S_k$. Combining these two facts, wel have:\\
    \begin{align*}
    Y_{2,t} &\ge Y_{1,s} = \sum_k\lambda_kZ_{1k,s} = \sum_k\lambda_k\mathbb{I}(s\in S_k) \\
    &\ge \sum_{k = 1}^{k_t}\lambda_k\mathbb{I}(s\in S_k) = \sum_{k = 1}^{k_t}\lambda_k\mathbb{I}(t\in T_k) \\
    &= \sum_{k}\lambda_k\mathbb{I}(t\in T_k) = \sum_k\lambda_kZ_{2k,t}\\
    &= Y_{2min,t} \\
    \end{align*}
\end{itemize}
So, for any $Y_2$ satisfies the condition in statement $\mathit{2^\prime}$, we have:
\begin{align*}
\sum_{j=1}^{n}\mu_2(j)Y_{2,j} \ge \sum_{j=1}^{n}\mu_2(j)Y_{2min,j} \ge \sum_{i=1}^{m}\mu_1(i)Y_{1,i}.
\end{align*}
\end{proof}

\section{Conclusion}

In this paper, we defined the notion of quantum coupling and proved a quantum generalisation of Strassen theorem for probabilistic coupling. It is well-known that Strassen theorem is true in both the finite-dimensional and infinite-dimensional cases. However, Theorem \ref{q-strassen} (quantum Strassen theorem) was proved only in the finite-dimensional case. So, an open problem is: whether is quantum Strassen theorem still valid in the infinite-dimensional case?
Another interesting topic for further study is to use the coupling techniques to study the behaviours of quantum random walks and quantum Markov chains. As pointed out in the Introduction, in the future studies, we hope
to apply quantum coupling to develop quantum relational Hoare logic and then use it in formal verification of quantum cryptographic protocols and differential privacy.

\section{Acknowledgment}

This work was partly supported by the Australian Research Council (Grant No: DP160101652, DE180100156) and the Key Research Program of Frontier Sciences, Chinese Academy of Sciences (Grant No: QYZDJ-SSW-SYS003).

\end{document}